\newcommand\bOmega{\boldsymbol{\Omega}}
\newcommand\od[2]{\dfrac{\rd {#1}}{\rd {#2}}}
\newcommand\rd[0]{\mathrm{d}}
\newcommand{\pd}[2]{\dfrac{\partial #1}{\partial #2}}
\newcommand{\dd}{\,{\rm d}}
\newcommand\Beta{\boldsymbol{B}}
\newtheorem{remark}{Remark}[section]
\newtheorem{theorem}{Theorem}[section]
\newtheorem{example}[theorem]{Example}
\newcommand{\Vint}[1]{\left \langle #1 \right \rangle}
\newcommand{\vint}[1]{\langle #1 \rangle}
\def\bm{{\boldsymbol{m}}}
\def\bsalpha{\boldsymbol{\alpha}}
\def\bE{\boldsymbol{E}}
\def\bff{\boldsymbol{f}}
\def\br{\boldsymbol{r}}
\begin{document}

\title{An Approximate $M_2$ Model for Radiative Transfer in Slab
  Geometry}

\author{Graham Alldredge\thanks{Department of Mathematics RWTH Aachen
    University, Aachen, Germany. Email:
    \texttt{alldredge@mathcces.rwth-aachen.de}}, ~Ruo Li\thanks{CAPT,
    LMAM \& School of Mathematical Sciences, Peking University. Email:
    \texttt{rli@math.pku.edu.cn}} ~and~ Weiming Li\thanks{School of
    Mathematical Sciences, Peking University. Email:
    \texttt{liweimingsms@gmail.com}}}

\maketitle

\date{}

\begin{abstract}
  We propose an approximate second order maximum entropy ($M_2$) model
  for radiative transfer in slab geometry. The model is based on the
  ansatz of the specific intensity in the form of a
  $\Beta$-distribution. This gives us an explicit form in its closure. The
  closure is very close to that of the maximum entropy, thus an
  approximation of the $M_2$ model. We prove that the new model is
  globally hyperbolic, sharing most of the advantages of the maximum
  entropy closure. Numerical examples illustrate that it provides
  solutions with satisfactory agreement with the $M_2$ model.
\end{abstract}

Keywords: Radiative transfer, slab geometry, maximum entropy, moment
model.

%!TEX root = appM2.tex

\section{Introduction}
The radiative transfer equation describes the density of a system of 
particles interacting with a background medium.  It has been widely 
used in various applications such as atmospheric modeling, nuclear
engineering and medical imaging. As the radiative transfer equation is
a problem in very high dimension, deriving a low dimensional model is
the first step before further numerical studies. Most models for
radiative transfer, and more broadly, for kinetic equations in
general, fall in one of the following catogories: particle models,
moment models, and discrete-velocity models. In this paper we will
focus only on moment models.

Often, the moment models are equipped with the nice property of being
naturally rotational invariant, their variables have clear physical
meaning and therefore offer clear insight into the physics of the
problem under consideration. They can be highly efficient in many
applications, for example the Euler and Navier-Stokes equations in fluid
dynamics.
However, when creating moment model it can be difficult to ensure that
it is both hyperbolic and consistent with the fact that the unknown in
the radiative transfer equation is a density and thus must be
nonnegative.
This second property we call \emph{positivity} and is a major problem
with the $P_n$ method \cite{garrett2013comparison}, a linear method
which is one of the most well-known moment methods in the radiative
transfer community.
It was proven in
\cite{mcclarren2008solutions} that linear moment models that are also
hyperbolic and rotational invariant almost inevitably bring
non-positive solutions, thus motivating the study of nonlinear
models. However, it is not straightforward to give a hyperbolic
nonlinear model.%
\footnote{
For instance, the counterpart of $P_n$ model for the Boltzmann
equation is Grad's model \cite{grad1949kinetic}.
Unlike $P_n$, it is nonlinear, and is not globally hyperbolic.
}
A hyperbolic moment model successfully preserving positivity is the
maximum entropy model in radiative transfer.
It was first proposed by Minerbo
\cite{minerbo1978maximum}, and Levermore generalized it and exposed
its mathematical structure \cite{levermore1996moment}.
Unfortunately, the maximum entropy model
currently has no efficient implementation because the closure relation
is not explicit but must be computed through the solution of an
optimization problem. However, it is still
regarded as the most attractive model due to its highly desirable
mathematical properties.

While some recent work has been directed towards developing
efficient algorithms for $M_n$ \cite{a2001one, alldredge2014adaptive},
there has also been increasing effort devoted to its
approximation. In this work, we investigate a simple case, in
particular in slab geometry and where frequency dependence is omitted.
This includes the single-frequency and gray medium cases. 
When only the first three moments are specified, observations on the 
specific intensity maximizing the Bose-Einstein entropy in both cases gives
us the expectation that the specific intensity with maximal entropy can be
well approximated by a $\Beta$-distribution.
With such a form, an explicit expression of the
fourth moment as a function of the first three moments, is
obtained, resulting in a moment model with explicit closure.

For this new model, we illustrate that its closure is very similar to
the closure of $M_2$ model, both for single-frequency and gray case.
The three closures are consistent with an underlying nonnegative density
and agree exactly on the boundary of the
realizability region. In the interior of the realizability region, 
all three closures agree with each other qualitatively quite well.
Moreover, we show that the new model shares two important mathematical
properties of the $M_2$ model: global hyperbolicity and finite signal
speeds no larger than the speed of light.

The new approximate model is very convenient for numerical simulation
due to its explicit closure and conservative formulation. We present
several numerical results comparing with the results given by the
original $M_2$ model. Due to the serious difficulties in the
implementation of the $M_2$ model, the numerical results of the $M_2$
model is obtained by very complex techniques, precisely a revised
implementation based on \cite{alldredge2014adaptive}. Even so, it is
still very time consuming. The comparison of the numerical results are
quite satisfactory, considering the dramatic efficiency improvement by
the approximate model.

The rest of this paper is arranged as follows:
In Sec. \ref{sec:preliminary} we introduce the basics of moment models
and the maximum entropy closure.
In Sec. \ref{sec:approx} we propose an approximate
$M_2$ model and analyze its properties.
In Sec. \ref{sec:numerics} we
compare the approximate model with $M_2$ using several examples.
Finally in Sec. \ref{sec:conclude} we summarize and draw conclusions.

%!TEX root = appM2.tex

\section{Preliminaries}\label{sec:preliminary}
Let the specific intensity $I(t,\boldsymbol{x},\nu,\bOmega)$ to be
proportional to the density of radiation energy, which is a function
depended on the time $t\in\mathbb{R}^+$, the spatial coordinates
$\boldsymbol{x}\in\mathbb{R}^3$, the frequency $\nu\in\mathbb{R}^+$,
and the direction variable $\bOmega\in S^2$ on the unit sphere. It is
governed by the general form of radiative transfer equation as
\begin{equation}\label{eq:rt}
  \dfrac{1}{c}\pd{I}{t}+\bOmega\cdot\nabla I =\mathcal{C}(I),
\end{equation}
where $\mathcal{C}(I)$ describes the interaction of radiation with
background medium.

Denote by $\left\{m_i\left(\bOmega\right)\right\}$ a set of basis of a
polynomial space $\mathbb{H}_{\mathbf{N}} \left( \bOmega \right)$,
then the moments of the specific intensity $I$ are $\vint{\bm I}$
where we use the notation $\vint{\cdot}$ for either
$$
\Vint{g} := \int_{S^2} g(\Omega)\, dS(\Omega)
 \qquad \text{or} \qquad
\Vint{g} := \int_0^\infty \int_{S^2} g(\nu, \Omega)\, dS(\Omega) d\nu,
$$
where $dS(\Omega)$ is the volume element on the sphere.
The former leads only to an angular closure, while the latter leads to
the so-called gray approximations.
The exact moments $\vint{\bm I}$ satsify
\begin{equation}\label{eq:moment-eq}
    \dfrac{1}{c}\pd{\left\langle \boldsymbol{m}I\right\rangle}{t}
    +\nabla\cdot\left\langle\bOmega\boldsymbol{m}I\right\rangle
    = \left\langle\boldsymbol{m}\mathcal{C}(I)\right\rangle,
\end{equation}
where $\langle g \rangle$ can either mean $\int_{S^2} g\dd\bOmega$
or $\int_0^\infty\left(\int_{S^2} g\dd\bOmega\right)\dd\nu$.
However, the equation for the term
$\left\langle \bOmega\boldsymbol{m}I \right\rangle$ involves moments
of polynomials not in $\mathbb{H}_{\mathbf{N}} \left( \bOmega
\right)$, therefore \eqref{eq:moment-eq} is not a closed
system.
A moment model is then defined by approximating the higher-order moments
in terms of lower order moments to give a closed system of equations
approximating \label{eq:moment-eq} resulting in a system of equations
of the form
\begin{equation}\label{eq:moment-closure}
 \dfrac{1}{c}\pd{\bE}{t} + \nabla \cdot \bff(\bE) = \br(\bE),
\end{equation}
where $\bE \simeq \left\langle \boldsymbol{m}I\right\rangle$,
$\bff(\bE) \simeq \left\langle \bOmega\boldsymbol{m}I\right\rangle$, and
$\br(\bE) \simeq \left\langle\boldsymbol{m}\mathcal{C}(I)\right\rangle$.
How this closure is made is called a \emph{moment closure} and has
a fundamental impact on the performance of the moment model.

The maximum entropy principle is an elegant way of deriving moment closure.
It is based on reconstructing an ansatz of $I$ from the moments by 
solving the following constrained variational maximization problem 
\begin{equation}\label{eq:bose-entropy}
    \begin{aligned}
        &\max~~H(I)\\
        &s.t.\quad\langle I \boldsymbol{m}\rangle=\boldsymbol{E},\\
    \end{aligned}
\end{equation}
where $H(I)$ is the Bose-Einstein entropy
\begin{equation}\label{eq:be-entropy}
 H(I) := \left\langle \dfrac{k_B c^2}{2\hbar \nu^3} \left( -I \log(I)
  + \left(I + \dfrac{2\hbar\nu^3}{c^2} \right)
  \log\left(I + \dfrac{2\hbar\nu^3}{c^2} \right)
  \right)\right\rangle
\end{equation}
For the angular closure, the solution of \eqref{eq:bose-entropy} has
the form
\begin{equation}\label{eq:be-ansatz}
\hat{I}_{\bsalpha}(\boldsymbol{\Omega}) =
\dfrac{2\hbar\nu^3}{c^2} \left( \exp \left( \dfrac{\hbar \nu}{k_B
    }\bsalpha\cdot\boldsymbol{m} \right)-1 \right)^{-1},
\end{equation}
while for the gray approximations the solution of
\eqref{eq:bose-entropy} has the form
\begin{equation}\label{eq:gray-ansatz}
\hat{I}_{\bsalpha}(\boldsymbol{\Omega}) =
\frac{\sigma}{( \bsalpha \cdot \bm )^4},
\end{equation}
where $\sigma$ is the Stefan-Boltzmann constant.
In both cases $\bsalpha = \bsalpha(\bE)$ is the unique vector such
that $\vint{\hat{I}_{\bsalpha} \bm} = \bE$.
Then the $M_n$ method is defined by taking
\begin{equation}
 \bff(\bE) = \left\langle \bOmega\boldsymbol{m} \hat{I}_{\bsalpha}
  \right\rangle
 \qquad \text{and} \qquad
 \br(\bE) \simeq \left\langle\boldsymbol{m}\mathcal{C}
  (\hat{I}_{\bsalpha})\right\rangle
\end{equation}
in \eqref{eq:moment-closure}.

Properties of the $M_n$ model are discussed in
\cite{levermore1996moment,dubroca1999theoretical},
including a proof of its global hyperbolicity. It is also positivity
preserving, and entropy dissipating.  However, from
\eqref{eq:bose-entropy} one see that the closure is not given
explicitly. Instead one has to solve $\langle \hat{I}_{\bsalpha}
\boldsymbol{m} \rangle = \boldsymbol{E}$ for the Lagrange multipliers
$\bsalpha$, which involves solving a coupled nonlinear algebraic
system.
Unfortunately, it is
expensive and difficult to numerically solve this algebraic system. Due
to these numerical difficulties, there has so far been no efficient
general implementation of the $M_n$ model except in the
$M_1$ case \cite{berthon2007hllc, olbrant2012realizability}. However,
in some examples, the $M_1$ model is qualitatively wrong \cite{a2001one}.
Generally, there are two
approaches for resolving the difficulties in the implementation of the
maximum entropy model. One approach is to develop efficient algorithms
for solving the optimization problem. There has recently been some
progress in computing for high order $M_n$ models
\cite{alldredge2012high, alldredge2014adaptive}. The other approach is
to give an approximate model of $M_n$ which is explicit and
therefore more computationally feasible, while still preserving as
many of the advantages of the $M_n$ model as possible.

%!TEX root = appM2.tex

\section{Approximate $M_2$ Model}
\label{sec:approx}
Due to the difficulties in deriving an approximate model, we
restrict ourselves to the radiative transfer equation in slab
geometry and consider only an approximation of $M_2$ model.
The radiative transfer equation becomes
\begin{equation}\label{eq:slab}
    \begin{array}{c@{\vspace{5pt}}c}
        \dfrac{1}{c}\pd{I(\mu)}{t}+\mu\pd{I(\mu)}{z}=
        -\sigma_a I(\mu)-\sigma_s\left(I(\mu)-\dfrac{I_0}{2}\right),\\
        \mu\in[-1,1],\quad I_0 = \int_{-1}^1 I(\mu)\dd\mu.\\
    \end{array}
\end{equation}
For the case where we only perform an angular closure, $I$, $\sigma_a$,
and $\sigma_s$ are still dependent on the frequency $\nu$, while in the
case of the gray approximations, we assume that $\nu$ has already been
integrated out of the equation.
Therefore from now on the angle-bracket notation indicates integrals
over $\mu$
$$
\Vint{g} = \int_{-1}^1 g(\mu) \, d\mu.
$$
Let $E_j \simeq \int_{-1}^1\mu^j I(\mu)\dd\mu$, and denote by
$\mathcal{M}$ the realizable moment vector space of $M_2$, then
\[
    \mathcal{M}=\left\{\left(E_0,E_1,E_2\right)\left|
        0 < E_2 < E_0, E_1^2 < E_0 E_2
    \right.\right\}.
\]
as shown in \cite{monreal2008higher}. We observed that the specific
intensity with maximal entropy may be well approximated by a
$\Beta$-distribution if we consider a $M_2$ model. This makes us take
the $\Beta$-distribution as an ansatz for $I$ :
\begin{equation}\label{eq:approxM2-distribution}
    \hat{I} = \dfrac{E_0}{2 \Beta(\xi,\eta)}\left(\dfrac{\mu+1}{2}
    \right)^{\xi-1}\left(\dfrac{1-\mu}{2}\right)^{\eta-1},\quad 
    \xi=\dfrac{\alpha}{\beta},\quad
    \eta = \dfrac{1-\alpha}{\beta}.
\end{equation}
We note that the combinations of $\Beta$-distribution was used as an
approximation for the specific intensity in \cite{vikas2013radiation},
though for a different purpose.
\begin{remark}
  The choice of $\Beta$-distribution as ansatz is somewhat arbitrary, but
  it has much flexibility, allowing for skewness and non-symmetry.  We
  will show later on that it captures the essential profile of the
  specific intensity.
\end{remark}

Self-consistency for the first to the third moment require
\[
    \alpha=\dfrac{E_1/E_0+1}{2},\quad 
    \beta=\dfrac{(E_1/E_0)^2-E_2/E_0}{E_2/E_0-1},
\]
which gives a non-linear closure as
\[
    E_3=\int_{-1}^1\mu^3 \hat{I}(\mu)\dd\mu
           =\dfrac{E_1(E_2^2+2E_1^2-3E_0 E_2)}{2E_1^2-E_0 E_2-E_0^2}.
\]

On the boundaries of $\mathcal{M}$, since there is only one
nonnegative ansatz with the correct moments \cite{CurFial91} and the
$\Beta$ distribution is clearly nonnegative, our closure agrees with
the $M_2$ closure.
Indeed,
\begin{enumerate}
\item If $E_2/E_0=1$, then the specific intensity ansatz of $M_2$ is
  $I = \dfrac{1}{2} \left( E_0+ E_1 \right) \delta(\mu-1) +
  \dfrac{1}{2} \left( E_0 - E_1 \right) \delta(\mu+1)$, for which
  $\Beta$-closure shares with $M_2$ the same closure, and $E_3=E_1$.
\item If $E_2/E_0=(E_1/E_0)^2$, then the specific intensity of $M_2$
  is $I=E_0\delta(\mu-E_1/E_0)$, for which $\Beta$-closure also shares
  the same closure with $M_2$, thus $E_3=E^3_1/E^2_0$.
\end{enumerate}
Furthermore, this closure is correct in the isotropic case (that is, 
when $\bE = (E_0, 0, E_0/3)$ contains the moments of the constant
density $I(\mu) \equiv E_0 / 2$).

In Figure \ref{fig:compare-contour} we plot the contours of $E_3/E_0$
on $\mathcal{M}$ for a comparison
between the $M_2$ model and the $\Beta$-closure model.
Clearly the models agree qualitatively quite well.
\begin{figure} 
  \subfigure{  
    \includegraphics[width=0.3\textwidth]{./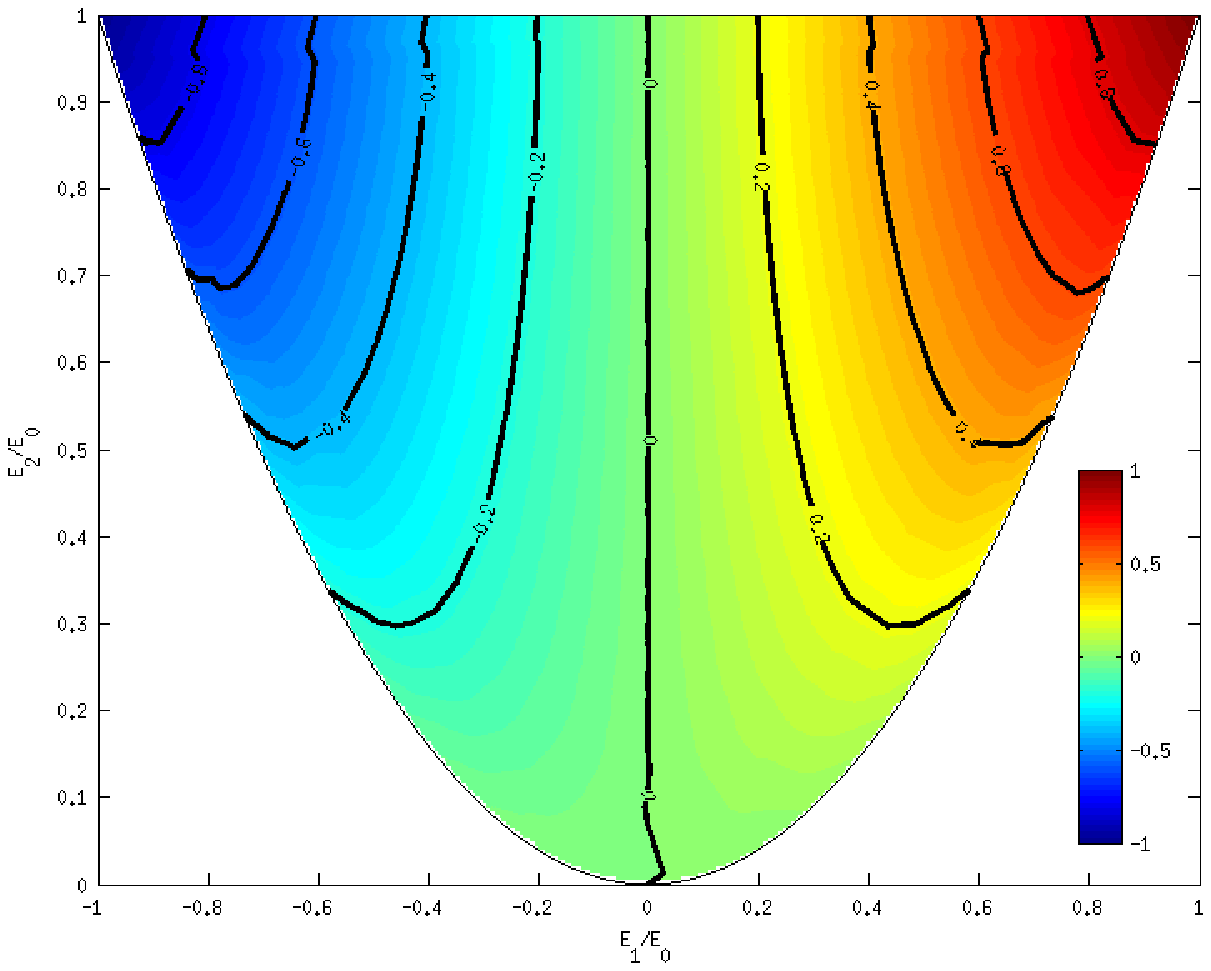}} 
  \subfigure{  
    \includegraphics[width=0.3\textwidth]{./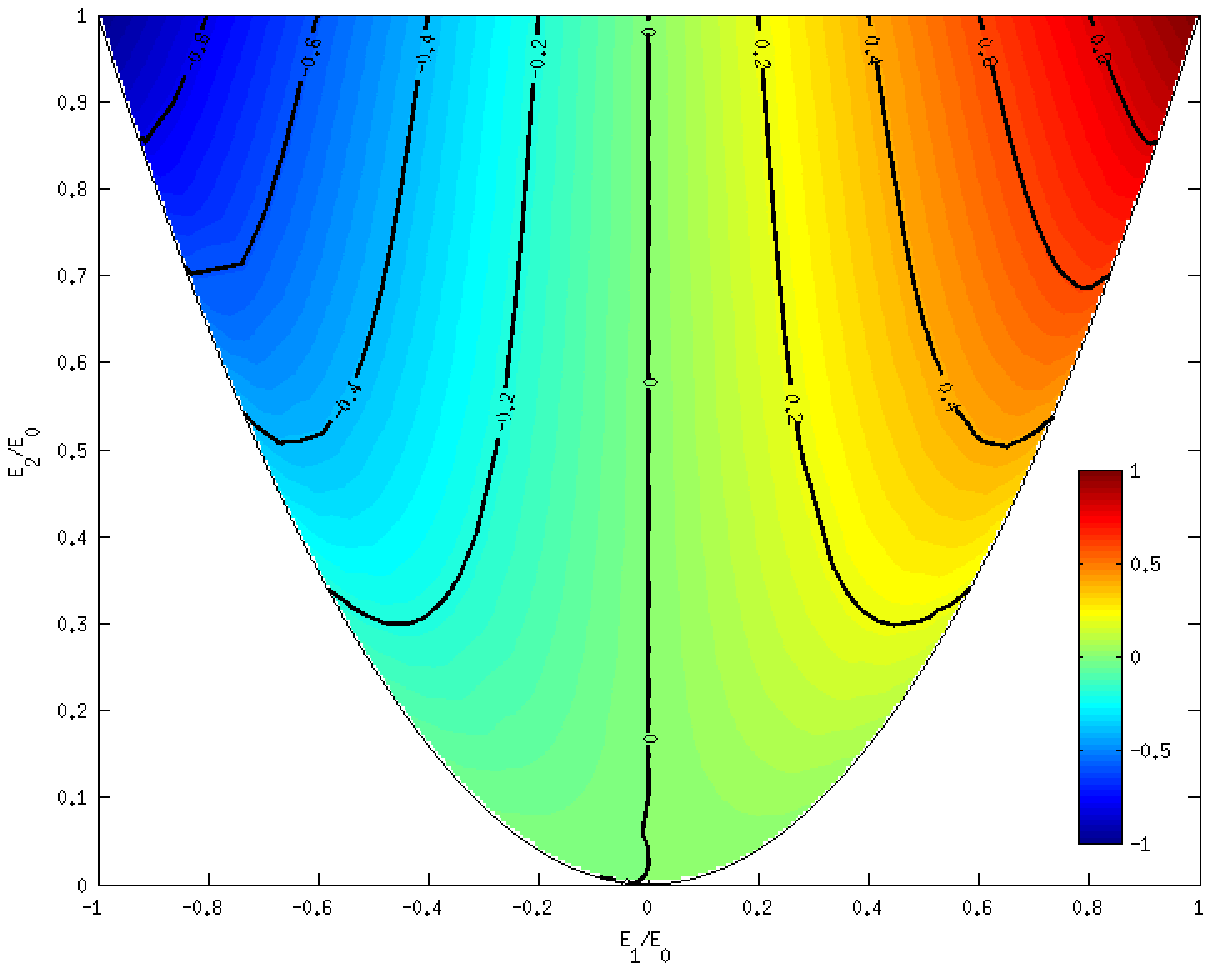}} 
  \subfigure{
    \label{fig:beta-contour} 
    \includegraphics[width=0.3\textwidth]{./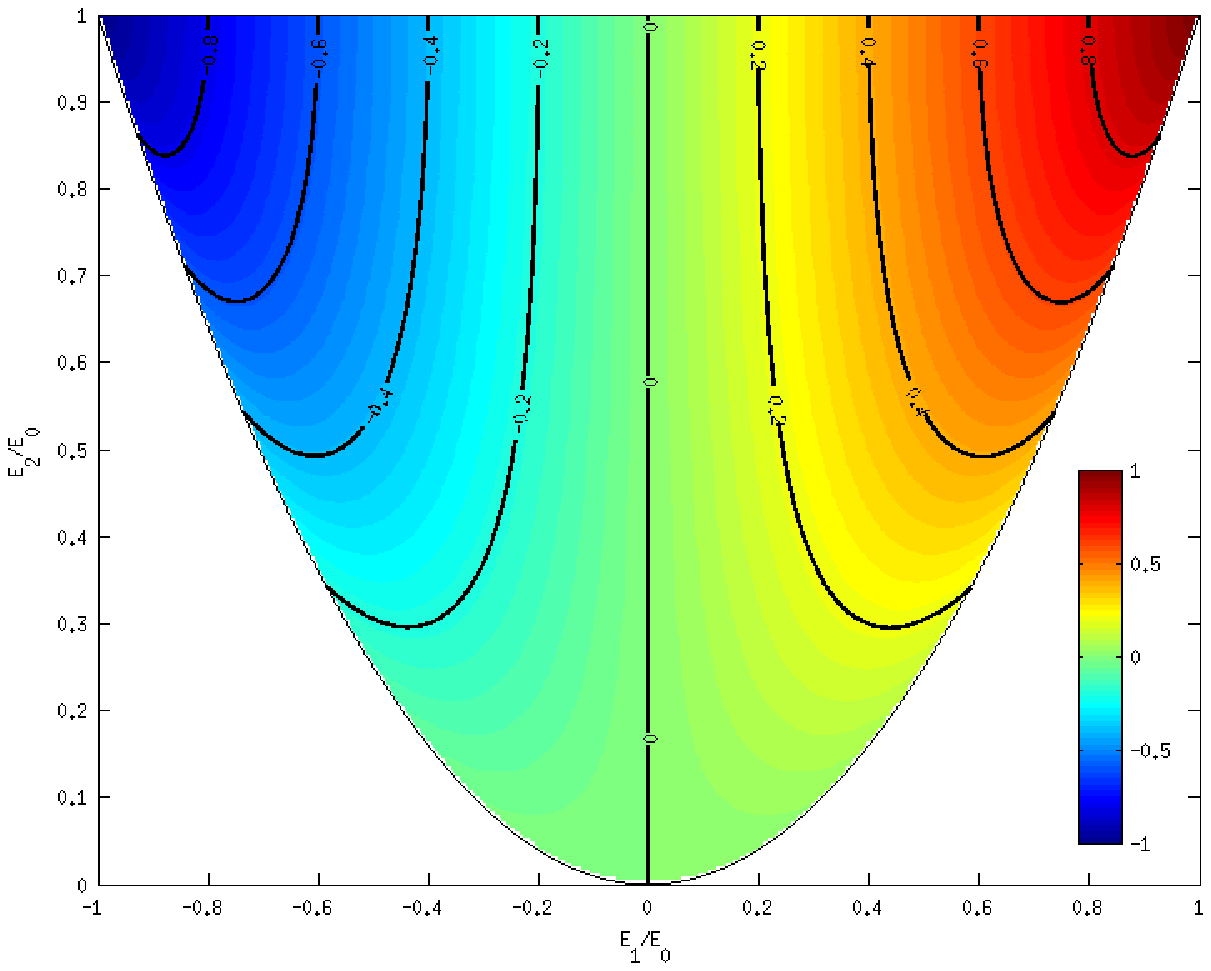}}  
  \caption{Comparing contour between $M_2$ for single frequency (left),
  $M_2$ for gray approximation (center), and $\Beta$-closure (right)} 
  \label{fig:compare-contour} 
\end{figure}

For four sets of moments in typical regions, Figure
\ref{fig:compare-distribution} compares the specific intensity between
the $M_2$ model and our $\Beta$-closure model. They all
qualitatively agree with each other.
\begin{figure} 
  \subfigure{  
    \label{fig:centerlower} 
    \includegraphics[width=0.48\textwidth]{./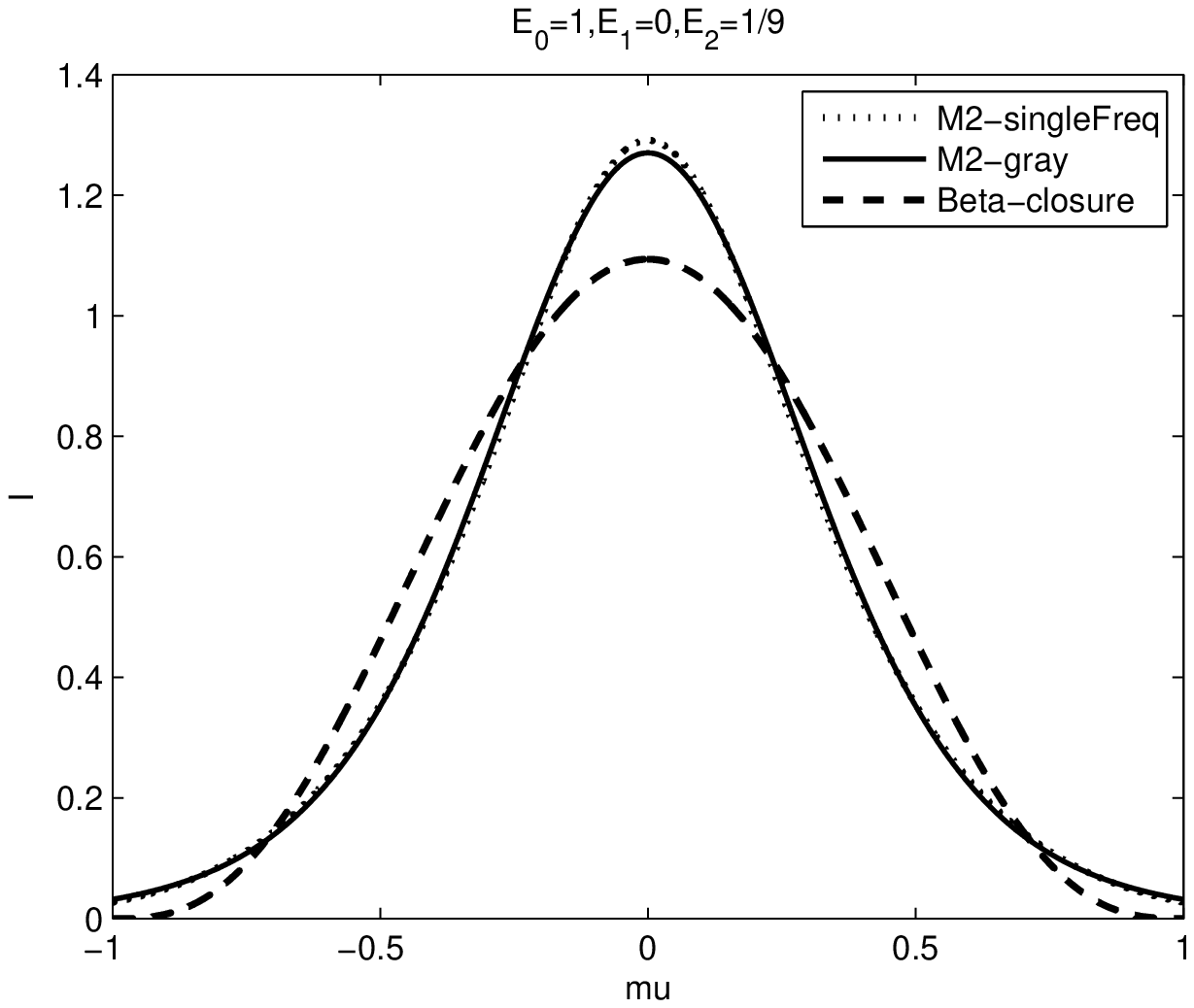}} 
  \hfill
  \subfigure{
    \label{fig:rightboundary}
    \includegraphics[width=0.48\textwidth]{./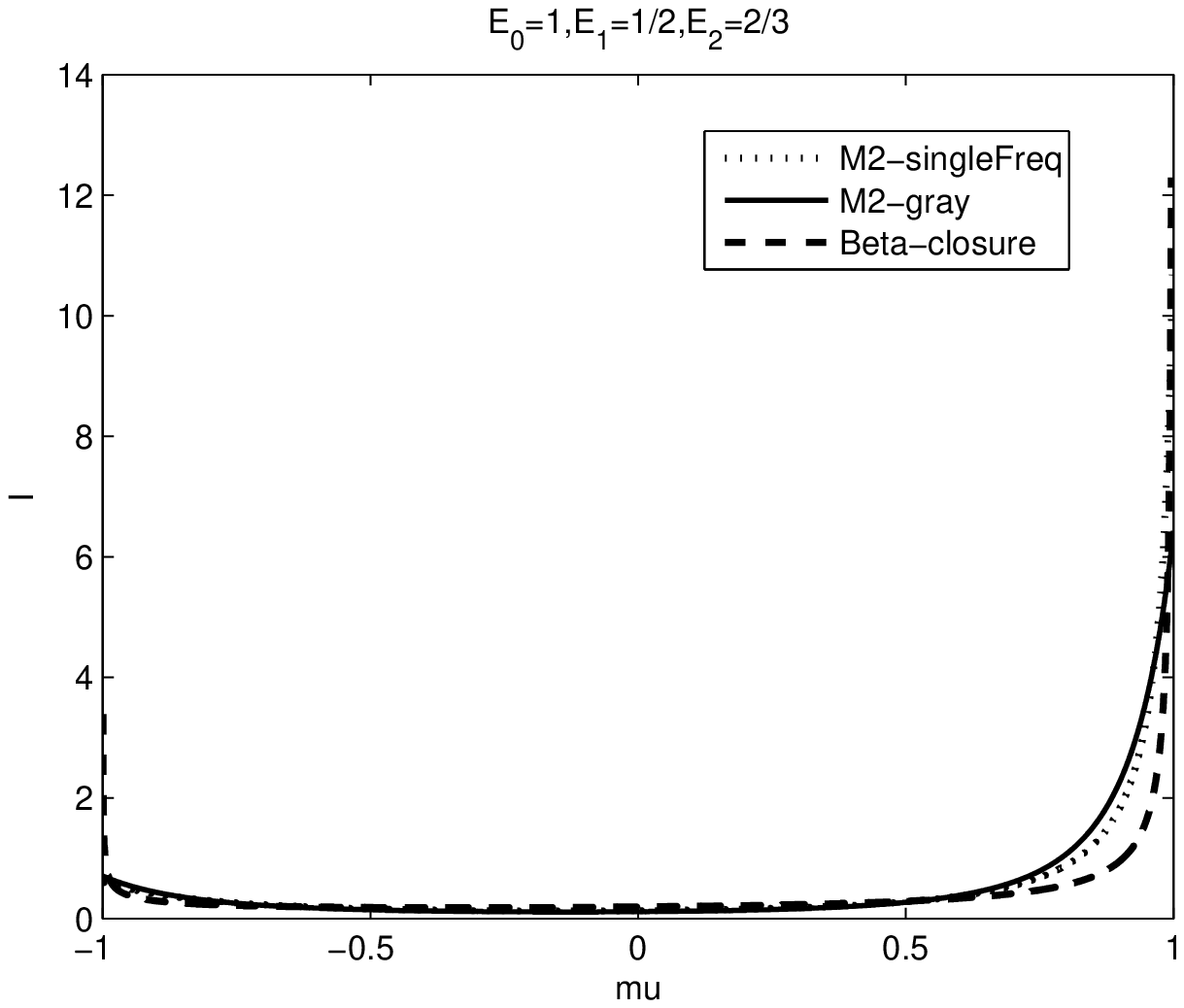}}
  \hfill
  \subfigure{
    \label{fig:rightcorner}
    \includegraphics[width=0.48\textwidth]{./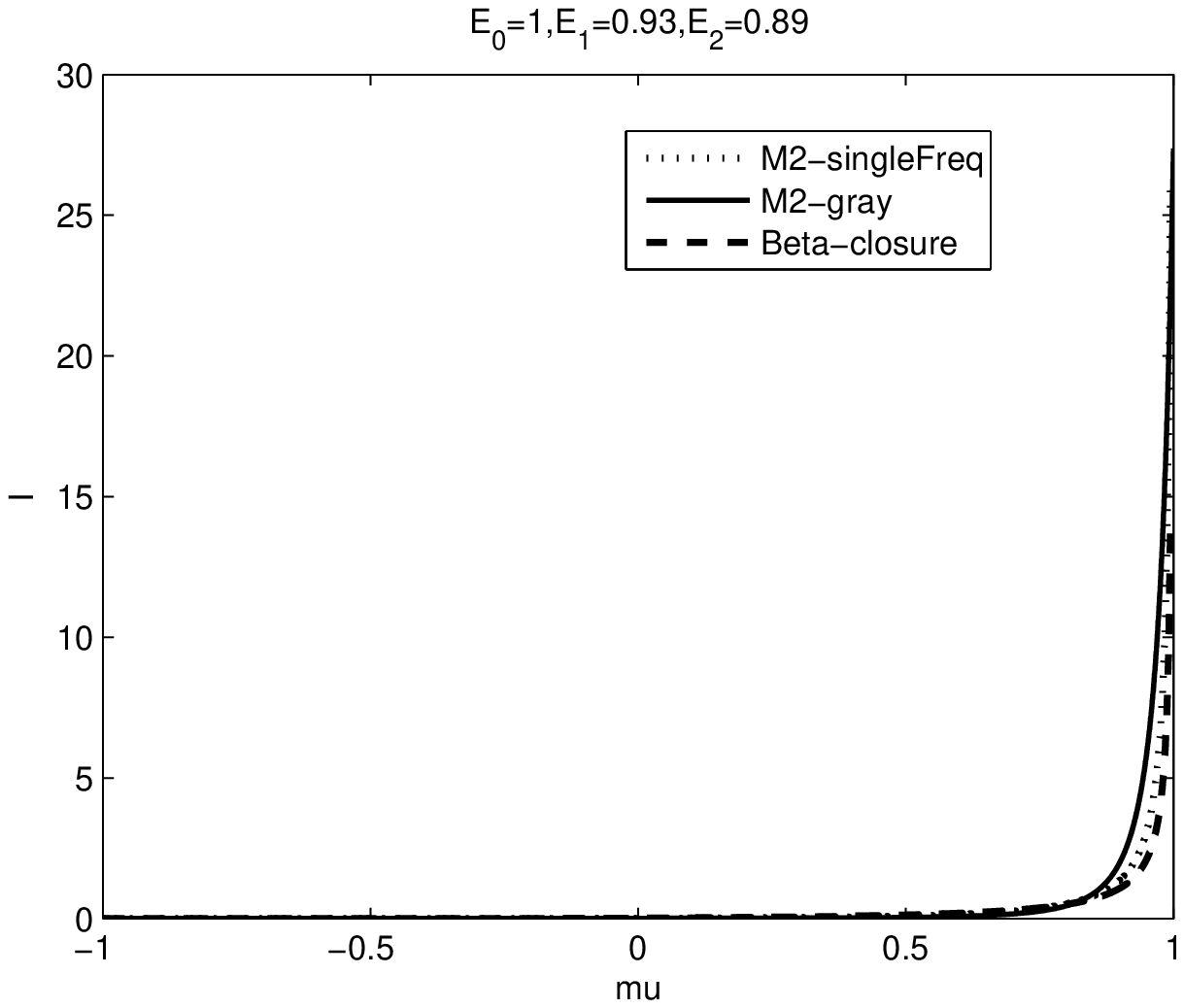}}
  \hfill
  \subfigure{
    \label{fig:rightcenter}
    \includegraphics[width=0.48\textwidth]{./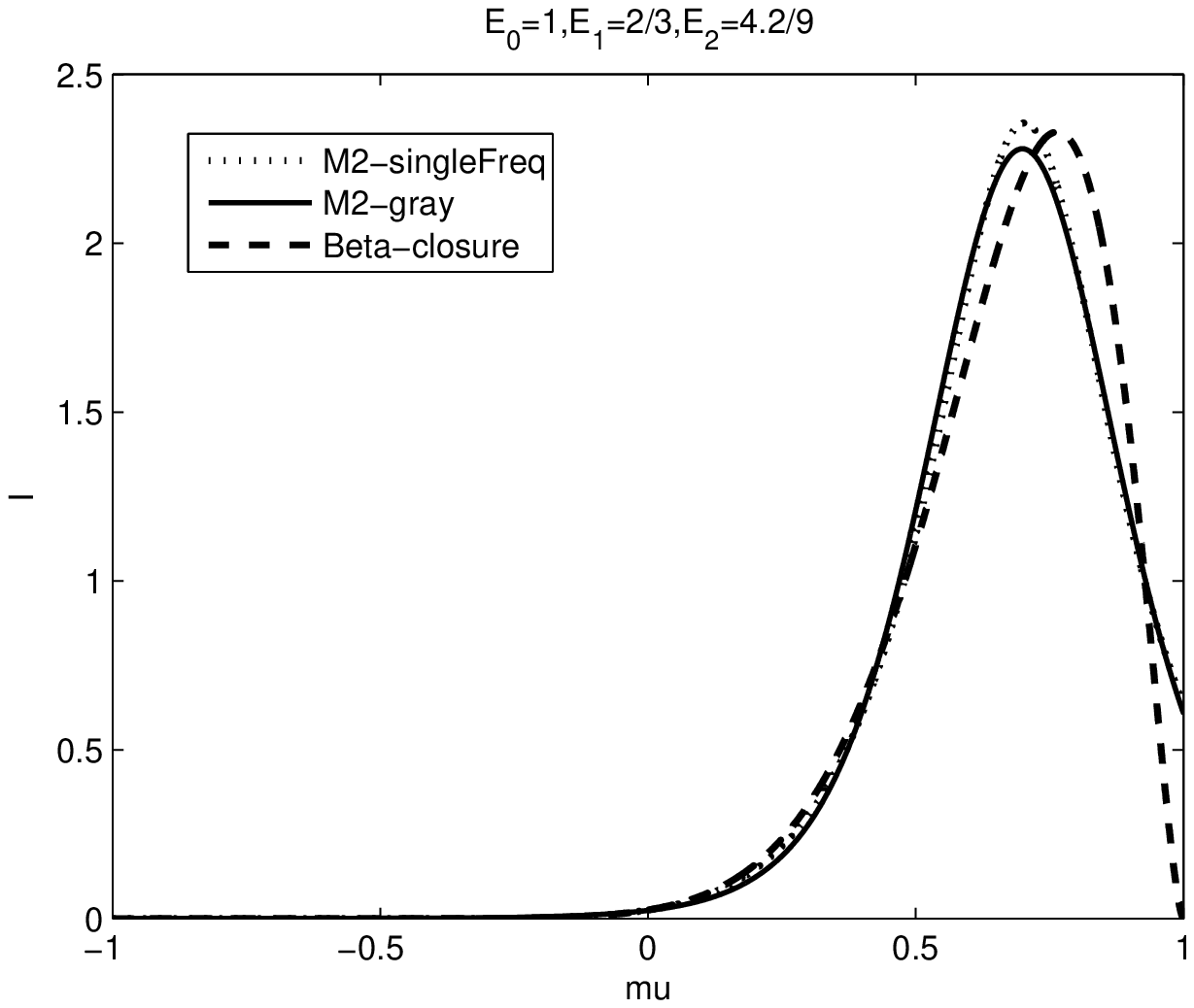}}
  \caption{Comparing specific intensity between $M_2$ and $\Beta$-closure model} 
  \label{fig:compare-distribution} 
\end{figure}

Let us show firstly that the new model based on the
$\Beta$-distribution is globally hyperbolic. The Jacobian matrix of
the approximate model is
\begin{equation}\label{eq:Jacobi-of-ApproxM2}
    \boldsymbol{J}=\left(\begin{array}{ccc}
            0 & 1 & 0\\
            0 & 0 & 1\\
            \pd{E_3}{E_0} & \pd{E_3}{E_1}& \pd{E_3}{E_2}\\
        \end{array}
    \right).
\end{equation}
Let $a_0=\pd{E_3}{E_0}$, $a_1=\pd{E_3}{E_1}$, and $a_2=\pd{E_3}{E_2}$,
they satisfy
\[
\begin{split}
  a_0&=-{\dfrac {E_{{1}} \left( E_{{0}}-E_{{2}} \right)  \left( 3\,E_{{0}}E_{{
            2}}-4\,{E_{{1}}}^{2}+{E_{{2}}}^{2} \right) }{ \left( {E_{{0}}}^{2}+E_{
          {0}}E_{{2}}-2\,{E_{{1}}}^{2} \right) ^{2}}},\\
  a_1 &={\frac { \left( 3\,{E_{{0}}}^{2}-E_{{0}}E_{{2}}-2\,{E_{{1}}}^{2}
      \right)  \left( E_{{0}}E_{{2}}-2\,{E_{{1}}}^{2}+{E_{{2}}}^{2}
      \right) }{ \left( {E_{{0}}}^{2}+E_{{0}}E_{{2}}-2\,{E_{{1}}}^{2}
      \right) ^{2}}},\\
  a_2 &= {\frac {E_{{1}} \left( E_{{0}}-E_{{2}} \right)  \left( 3\,{E_{{0}}}^{2
        }+E_{{0}}E_{{2}}-4\,{E_{{1}}}^{2} \right) }{ \left( {E_{{0}}}^{2}+E_{{0
          }}E_{{2}}-2\,{E_{{1}}}^{2} \right) ^{2}}}.
\end{split}
\]
The characteristic polynomial of $\boldsymbol{J}$ is 
\begin{equation}\label{eq:characteristic}
        p(\lambda)=\lambda^3-a_2\lambda^2-a_1\lambda-a_0.
\end{equation}
We then have the following theorem:
\begin{theorem}[Global strict hyperbolicity of $\Beta$-closure model]
  The $\Beta$-closure model is globally strictly hyperbolic in the
  interior of the realizable region $\mathcal{M}$, and its propagation
  speed is less than the speed of light.
\end{theorem}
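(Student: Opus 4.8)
The plan is to work directly with the characteristic polynomial $p(\lambda)=\lambda^3-a_2\lambda^2-a_1\lambda-a_0$, since $\boldsymbol{J}$ is a companion matrix whose eigenvalues (the characteristic speeds, in units of $c$) are exactly the roots of $p$. Global strict hyperbolicity is then the statement that $p$ has three distinct real roots at every interior point of $\mathcal{M}$, and the bound on the propagation speed is the statement that all three roots lie in the open interval $(-1,1)$. First I would exploit the fact that each $a_i=\partial E_3/\partial E_i$ is homogeneous of degree $0$ in $(E_0,E_1,E_2)$ (because $E_3$ is homogeneous of degree $1$), so after setting $E_0=1$ and writing $r=E_1/E_0$, $s=E_2/E_0$, the whole problem reduces to the two-variable region $R=\{(r,s):r^2<s<1\}$, on which the common denominator $D=(1+s-2r^2)^2$ is strictly positive.

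For the containment in $(-1,1)$ I would use the clean criterion that, for a monic \emph{real-rooted} cubic, all roots lie in $(-1,1)$ if and only if $p(1)>0,\ p'(1)>0,\ p''(1)>0$ and $p(-1)<0,\ p'(-1)>0,\ p''(-1)<0$, i.e.\ the successive derivatives alternate in sign at the two endpoints. After clearing the positive factor $D$ these become polynomial inequalities in $(r,s)$, and the endpoint values factor very favorably: one finds $p(1)=\frac{(1-s)^2(1-r)(1-2r+s)}{D}$, where $1-r>0$ and $1-2r+s>(1-r)^2>0$ on $R$, so $p(1)>0$; the reflection symmetry $\mu\mapsto-\mu$ (i.e.\ $r\mapsto-r$, $\lambda\mapsto-\lambda$) then gives $p(-1)<0$ for free. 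I expect the derivative conditions to factor with the same realizability factors $(1\pm r)$, $(s-r^2)$, $(1-s)$, while the $p''$ conditions reduce simply to $|a_2|<3$; verifying these is routine once the factorizations are available.

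The genuinely hard part is proving the roots are real and distinct, i.e.\ that the discriminant $\Delta$ of $p$ is strictly positive on $R$. This cannot be skipped, since the six sign conditions above do \emph{not} imply real-rootedness (for example $\lambda^3+\lambda$ satisfies all of them yet has complex roots). Equivalently I would show that $p'$ has two real critical points ($a_2^2+3a_1>0$) and that $p$ takes opposite signs there. After clearing denominators $\Delta$ is a single polynomial in $(r,s)$ of fairly high degree, and the main obstacle is certifying its positivity over the nontrivial region $R$. The plan is to seek a factorization of $\Delta$ into the realizability factors times a manifestly positive remainder, or, failing a clean factorization, to substitute $s=r^2+u$ with $u>0$ and $r^2+u<1$ so that the constraints become sign conditions on the new variables and $\Delta$ can be written as a combination of monomials in $u$, $(1-r^2-u)$, $(1\pm r)$ with nonnegative coefficients. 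A useful consistency check is the isotropic point $(r,s)=(0,1/3)$, where $p(\lambda)=\lambda^3-\frac{2}{3}\lambda$ has the distinct roots $0,\pm\sqrt{2/3}\in(-1,1)$ and $\Delta=\frac{32}{27}>0$.

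Finally I would record why the slicker structural routes fail, to justify the computational approach. Although $\boldsymbol{J}$ is a companion matrix and hence admits a Hankel symmetrizer, the natural moment Hankel matrix built from $(E_0,\dots,E_4)$ does not symmetrize it (the required identity $E_0+a_1E_2=a_0E_3$ already fails at the isotropic point, where the left side is $11/9$ and the right side is $0$), and because the $\Beta$-ansatz is not an entropy minimizer there is no entropy Hessian to supply a positive-definite symmetrizer. Likewise, writing $g_i=\partial\hat{I}/\partial E_i$ one checks that $\mathrm{span}\{g_0,g_1,g_2\}=\hat{I}\cdot\mathrm{span}\{1,\log\frac{1+\mu}{2},\log\frac{1-\mu}{2}\}$, so $p$ is orthogonal with respect to $\hat{I}\dd\mu$ to these logarithmic functions rather than to $\{1,\mu,\mu^2\}$; its roots are therefore not Gauss quadrature nodes (indeed at the isotropic state they are $0,\pm\sqrt{2/3}$ rather than the Gauss--Legendre nodes $0,\pm\sqrt{3/5}$), and no classical orthogonal-polynomial argument delivers reality. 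This is exactly why the positivity of the discriminant is the crux of the proof.
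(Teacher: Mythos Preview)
Your framework is sound and your factorizations of $p(\pm1)$ match the paper's exactly, but you correctly identify the discriminant positivity as the crux of your route and then leave it as an unexecuted plan; there is no guarantee the SOS-type decomposition you hope for will materialize, and the discriminant here is a genuinely messy bivariate polynomial.

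The paper sidesteps the discriminant entirely with a four-point sign-change argument. For $E_1>0$ it evaluates $p$ not only at $\pm1$ but also at two \emph{interior} points, the inflection point $a_2/3$ and the normalized first moment $E_1/E_0$, after checking $0<a_2/3<E_1/E_0<1$. Both interior evaluations factor over the same realizability quantities you already use:
\[
p\!\left(\frac{E_1}{E_0}\right)=-\frac{4E_1(E_0-E_1)(E_0+E_1)(E_0E_2-E_1^2)^2}{E_0^3(E_0^2+E_0E_2-2E_1^2)^2}<0,
\]
while $p(a_2/3)$ equals $\tfrac{4}{27}\tfrac{E_1}{E_0}Q^2Z$ times a rational function of $Q=s-r^2$ and $Z=1-s$ whose numerator has only positive coefficients, hence $p(a_2/3)>0$. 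The four values $p(-1),\,p(a_2/3),\,p(E_1/E_0),\,p(1)$ then alternate in sign, so the intermediate value theorem places one root in each of three disjoint subintervals of $(-1,1)$, delivering reality, distinctness, and the speed bound in one stroke; the case $E_1=0$ is handled by an explicit factorization of $p$. Your ``genuinely hard part'' thus evaporates: two additional single-point evaluations replace the entire discriminant analysis, and your endpoint-derivative conditions on $p'(\pm1)$, $p''(\pm1)$ become unnecessary.
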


\begin{proof}
  Let us study it in two cases:
    \begin{enumerate} 
       \item For $E_1=0$,
            \[
                p(\lambda)=\lambda\left(\lambda^2-
                \dfrac{E_2(3E_0-E_2)}{E_0(E_0+E_2)}\right).
            \]
            Clearly, the three distinct roots of $p(\lambda)$ are
            $\lambda_0 = 0$ and $\lambda_\pm = \pm \sqrt {\dfrac {E_2
                (3E_0-E_2)} {E_0(E_0+E_2)}}$. Since
            $0<\dfrac{E_2(3E_0-E_2)}{E_0(E_0+E_2)}<1$, all the roots
            are within $(-1,1)$.

        \item Consider $E_1\not=0$. Without loss of generality, assume $E_1>0$.
            Then
            \[
                -1 < 0 < \dfrac{a_2}{3} < \dfrac{E_1}{E_0} < 1.
            \]
            Let $Q=\dfrac{E_2}{E_0}-\left(\dfrac{E_1}{E_0}\right)^2$,
            $Z=1-\dfrac{E_2}{E_0}$, then
            \[
                p\left(\dfrac{a_2}{3}\right) = \dfrac{4}{27}\dfrac{E_1}{E_0}
                Q^2 Z f(Q,Z) > 0,
            \]
            with
            \[
                f(Q,Z) = \dfrac{68Q^2 Z^2+86Q Z^3+27Z^4+288 Q^3+360 Q^2 Z+112 Q Z^2}
                {(2Q+Z)^6}.
            \]
            We notice that
            \[
                \begin{split}
                    & p(-1) = -\dfrac{(E_0-E_2)^2(E_0+E_1)(E_0+2E_1+E_2)}
                    {(E_0^2+E_0E_2-2E_1^2)^2} < 0,\\
                    & p\left(\dfrac{E_1}{E_0}\right) = -\dfrac{4E_1}{E_0^3}
                    \dfrac{(E_0-E_1)(E_0+E_1)(E_0 E_2-E_1^2)^2}
                    {(E_0^2+E_0 E_2-2E_1^2)^2} < 0,\\
                    & p(1) = \dfrac{(E_0-E_2)^2(E_0-E_1)(E_0-2E_1+E_2)}
                    {(E_0^2+E_0 E_2-2E_1^2)^2} > 0,\\
                \end{split}
            \]
            thus $p(\lambda)$ has one root in each intervals $\left( -1,
              \dfrac{a_2}{3}\right)$, $\left( \dfrac{a_2}{3},
              \dfrac{E_1}{E_0} \right)$, and $\left( \dfrac{E_1}{E_0},
              1 \right)$. Similar arguments work for $E_1 <0$.
    \end{enumerate}
    This ends the proof.
\end{proof}

Denote $\lambda_1<\lambda_2<\lambda_3$ to be the three eigenvalues of
the Jacobian matrix $\boldsymbol{J}$, then it is clear that the
eigenvector corresponding to $\lambda_j$ is
$\boldsymbol{R}^{(j)}=\left[1~~ \lambda_j~~ \lambda_j^2\right]^T$.

\begin{theorem}\label{theorem:wave}
    The $\boldsymbol{R}^{(1)}$, $\boldsymbol{R}^{(3)}$-characteristic 
    fields are genuinely non-linear, while the $\boldsymbol{R}^{(2)}$
    -characteristic field is neither genuinely non-linear nor linearly
    degenerate.\footnote{For the definition of genuiely non-linear and
    linearly degenerate characteristic fields, see \cite{toro2009riemann}.}
\end{theorem}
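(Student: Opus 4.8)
The plan is to recast both genuine nonlinearity and linear degeneracy as a single scalar condition and then reduce that condition to the sign of an explicit polynomial. Recall that the $\boldsymbol{R}^{(j)}$-field is genuinely non-linear if $\nabla_{\bE}\lambda_j \cdot \boldsymbol{R}^{(j)} \neq 0$ throughout the interior of $\mathcal{M}$, linearly degenerate if this quantity vanishes identically, and ``neither'' if it vanishes on a proper nonempty subset. So the whole theorem is about the zero set of $\gamma_j := \nabla_{\bE}\lambda_j \cdot \boldsymbol{R}^{(j)}$. Since $\lambda_j(\bE)$ is defined implicitly by $p(\lambda_j;\bE)=0$, I would differentiate this relation along $\boldsymbol{R}^{(j)}$. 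Using the companion structure, the coefficients satisfy $a_k = \partial E_3/\partial E_k$, so $\boldsymbol{R}^{(j)}\cdot\nabla_{\bE}a_k = \sum_i \lambda_j^i\,\partial^2 E_3/(\partial E_i\partial E_k)$, and implicit differentiation yields the compact identity
\[
\nabla_{\bE}\lambda_j \cdot \boldsymbol{R}^{(j)} = \frac{(\boldsymbol{R}^{(j)})^{T} H\, \boldsymbol{R}^{(j)}}{p'(\lambda_j)}, \qquad H := \nabla_{\bE}^2 E_3 .
\]
Because the three eigenvalues are distinct (Theorem 2.1), $p'(\lambda_j)\neq 0$, so $\gamma_j = 0$ exactly when the Hessian quadratic form $g(\lambda):=(\boldsymbol{R})^{T}H\boldsymbol{R}$ — a quartic in $\lambda$ with coefficients given by the second derivatives of the explicit closure $E_3$ — vanishes at $\lambda_j$.

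The denominator sign is settled for free: $p$ is a monic cubic with three distinct real roots $\lambda_1<\lambda_2<\lambda_3$, so $p'(\lambda_1)>0$, $p'(\lambda_2)<0$, $p'(\lambda_3)>0$. Hence the problem is purely whether the quartic numerator $g$ vanishes at the extreme roots (GNL claim) or can vanish at the middle root (``neither'' claim). I would record once that
\[
g(\lambda)=\sum_{i,k=0}^{2}\lambda^{i+k}\,\frac{\partial^2 E_3}{\partial E_i\,\partial E_k},\qquad \boldsymbol{R}=[1,\lambda,\lambda^2]^{T},
\]
and then exploit the reflection symmetry $\mu\mapsto-\mu$, under which $E_1\mapsto -E_1$ and the explicit closure $E_3$ is an odd function of $E_1$. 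This parity forces $\lambda_j\mapsto -\lambda_{4-j}$ and, crucially, it makes $\partial^2 E_3/\partial E_0^2$ odd in $E_1$, hence zero on the slice $\{E_1=0\}$.

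For the middle field I would argue as follows. On $\{E_1=0\}$ the proof of Theorem 2.1 gives $\lambda_2=0$, so $\boldsymbol{R}^{(2)}=[1,0,0]^{T}$ and $g(\lambda_2)=g(0)=\partial^2 E_3/\partial E_0^2=0$; thus $\gamma_2\equiv 0$ on the symmetric locus and the $\boldsymbol{R}^{(2)}$-field is \emph{not} genuinely non-linear. To show it is also not linearly degenerate it suffices to exhibit a single point with $E_1\neq 0$ at which $g(\lambda_2)\neq 0$, which I would produce by evaluating $g$ at a convenient asymmetric moment vector (for instance near one corner of $\mathcal{M}$). Together these establish the ``neither'' statement for $\lambda_2$.

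The main obstacle is the genuine nonlinearity of $\lambda_1$ and $\lambda_3$, i.e. showing $g(\lambda_1)\neq 0$ and $g(\lambda_3)\neq 0$ \emph{everywhere} in the interior, since the roots have no usable closed form. My plan is to pass to the common-root condition: $p$ and $g$ share a root precisely when the resultant $\mathrm{Res}_\lambda(p,g)$ vanishes. After clearing the common denominator $(E_0^2+E_0E_2-2E_1^2)$, I expect $\mathrm{Res}_\lambda(p,g)$ to factor so that its only zeros in the interior of $\mathcal{M}$ lie on $\{E_1=0\}$; one then checks on that slice, where $g(\lambda)=2\lambda\big(\partial^2_{E_0E_1}E_3+\lambda^2\,\partial^2_{E_1E_2}E_3\big)$, that the nonzero roots $\lambda_\pm=\pm\sqrt{a_1}$ are not roots of $g$. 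This would show that for $E_1\neq 0$ the polynomials $p$ and $g$ have no common root at all, while on $E_1=0$ the only coincidence is at the middle root $\lambda_2=0$; consequently $\gamma_1,\gamma_3$ never vanish. The delicate, computation-heavy part is this resultant factorization together with the sign/non-vanishing bookkeeping on the localization intervals $(-1,a_2/3)$, $(a_2/3,E_1/E_0)$, $(E_1/E_0,1)$ supplied by Theorem 2.1; I would lean on a computer algebra system for the symbolic elimination and then verify the factorization and interval signs by hand.
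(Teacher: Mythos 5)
Your proposal follows essentially the same route as the paper's proof: the identity $\nabla\lambda_j\cdot\boldsymbol{R}^{(j)} = \bigl(p'(\lambda_j)\bigr)^{-1}\,\boldsymbol{R}^{T}H\boldsymbol{R}$ (your $g$ is exactly the paper's $\Delta(\lambda)$, since $a_k = \partial E_3/\partial E_k$), the sign bookkeeping for $p'(\lambda_j)$, the reduction to common roots of $p$ and the Hessian quadratic form, the resultant computation to rule out common roots when $E_1\neq 0$, and the separate analysis of the slice $\{E_1=0\}$ where $\lambda_2=0$ is the only coincidence. The resultant factorization you anticipate is exactly what the paper exhibits, namely $\mathrm{res}(p,\Delta,\lambda)$ proportional to $E_1 Q^6 Z^4(Z^2+16Q+8Z)(Q+Z)(Z^2+4Q)^3$, whose only interior zeros lie on $\{E_1=0\}$, so your plan closes as expected.
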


\begin{proof}
    Let  
    \[
        \Delta(\lambda)=
        \left[1~~ \lambda~~ \lambda^2\right]\cdot\dfrac{\partial(a_0,a_1,a_2)}
        {\partial(E_0,E_1,E_2)}\cdot 
        \left[
            \begin{array}{c}
                1\\
                \lambda\\
                \lambda^2\\
            \end{array}
        \right].
    \]
    As
    \[
    \nabla\lambda_j\cdot\boldsymbol{R}^{(j)} = \left(
      \left.\od{p}{\lambda}\right|_{\lambda=\lambda_j} \right)^{-1}
    \Delta(\lambda),
    \]
    and             
    \[
        \left.\od{p}{\lambda}\right|_{\lambda=\lambda_j}>0,j=1,3;\quad 
        \left.\od{p}{\lambda}\right|_{\lambda=\lambda_2}<0,
    \]
    $\nabla\lambda_j\cdot\boldsymbol{R}^{(j)}=0$ is equivalent to
    $\lambda_j$ being the common root of $\Delta(\lambda)$ and
    $p(\lambda)$.  As the resultant of $p(\lambda)$ and
    $\Delta(\lambda)$ is
    \[
        \mathrm{res}(p,\Delta,\lambda)= 128\,{\frac {E_{{1}}{Q}^{6}{Z}^{4} \left( {Z}^{2}+16\,Q+8\,Z \right) 
            \left( Q+Z \right)  \left( {Z}^{2}+4\,Q \right) ^{3}}{ E^4_0 \left( 2\,Q+Z
        \right) ^{15}}},
    \]
    where $Q=\dfrac{E_2}{E_0}-\left(\dfrac{E_1}{E_0}\right)^2$,
    $Z=1-\dfrac{E_2}{E_0}$.  As $Q$ and $Z$ are both positive in the
    interior of the realizable region $\mathcal{M}$, we have that
    $\mathrm{res}(p,\Delta,\lambda)\neq 0$ if $E_1\neq 0$. Therefore
    when $E_1\neq 0$, all characteristic fields satisfy
    $\nabla\lambda_j\cdot\boldsymbol{R}^{(j)}\neq 0$.

    In case that $E_1=0$, $\lambda_2=0$ is a common root of
    $p(\lambda)$ and $\Delta(\lambda)$, so
    \[
        \nabla\lambda_2\cdot\boldsymbol{R}^{(2)}=0.
    \]
    Meanwhile we have
    \[
        \Delta(\lambda)=2\,{\lambda \left( {\lambda}^{2}E_{{0}}-E_{{2}} \right) \frac { \left( 3\,E_{{0}}+
            E_{{2}} \right)  \left( E_{{0}}-E_{{2}} \right) }{{E_{{0}}}^{2}
        \left( E_{{0}}+E_{{2}} \right) ^{2}}},
    \]
    thus $\nabla\lambda_j\cdot\boldsymbol{R}^{(j)}\not=0$ for $j=1,3$.

    Collecting the arguments above, one has that
    \[
        \begin{split}
            &\nabla\lambda_j\cdot\boldsymbol{R}^{(j)}\not=0,\quad\forall (E_0,E_1,
            E_2)\in\mathcal{M},\quad\text{for}~~j=1,3;\\
            &\mathrm{sign}\left(\nabla\lambda_j\cdot\boldsymbol{R}^{(2)}\right)
            =\mathrm{sign}(E_1),\quad\text{under proper scaling of~} \boldsymbol{R}^{(2)}.
        \end{split}
    \]
\end{proof}

We point out that it is also valid for the $M_2$ model that
$\nabla\lambda_2\cdot\boldsymbol{R}^{(2)}=0$ when $E_1=0$. Actually,
the specific intensity of the $M_2$ model is
\[
    I(\mu)=\dfrac{\alpha_0}{(1+\alpha_1\mu+\alpha_2\mu^2)^4},
\]
for the gray case, and
\[
    I(\mu)=\dfrac{\alpha_0}{\exp(1+\alpha_1\mu+\alpha_2\mu^2)-1},
\]
for the single-frequency case.
In this formation, the denominator is always positive on $\mu\in[-1,1]$, and we have that
$E_1=0$ implies $\alpha_1=0$. Therefore,
\[
E_3=0,~~\lambda_2 = 0,\quad\forall~(E_0,0,E_2)\in\mathcal{M}.
\]
Direct calculations show $\nabla\lambda_2 \cdot\boldsymbol{R}^{(2)}\left|_
{(E_0,0,E_2)}\right.=\pd{a_0}{E_0}\left|_{(E_0,0,E_2)}\right.=0$.

Let us summarize briefly some advantages of the new model:
\begin{enumerate}
\item The ansatz for the specific intensity preserves positivity and
  has explicit closure relationship;
\item The model derived is conservative and globally hyperbolic; 
\item The signal speed of the new model is less than the speed of
  light;
\item The closure is very close to that of $M_2$;
\end{enumerate}
We present some numerical results for some benchmark problems to show
the quality of the new model as an approximation of the $M_2$ model.

%!TEX root = appM2.tex

\section{Numerical Results}\label{sec:numerics}
Our approximate $M_2$ model of \eqref{eq:slab} is therefore
\begin{equation}\label{eq:approxM2-moment}
    \begin{aligned}
        &\dfrac{1}{c}\pd{E_0}{t}+\pd{E_1}{z}=-\sigma_a E_0,\\
        &\dfrac{1}{c}\pd{E_1}{t}+\pd{E_2}{z}=-\left[\sigma_a+\sigma_s\right] E_1,\\
        &\dfrac{1}{c}\pd{E_2}{t}+\pd{E_3}{z}=-\sigma_a E_2
        +\sigma_s\left(\dfrac{E_0}{3}-E_2\right).\\
    \end{aligned}
\end{equation}
We consider the angular closure for a single frequency $\nu = 1$.
We solve equation \eqref{eq:approxM2-moment} using the canonical
finite volume scheme with the Lax-Friedrich numerical flux, and the
source term is treated implicitly.

To impose an inflow boundary condition, we only need to impose the value
of the flux on the boundary.  We derive it using upwind on the kinetic
scale. As we know the moments on the left and right cells, we can
reconstruct $\hat{I}(E^l_0, E^l_1, E^l_2, \mu)$ and $\hat{I}(E^r_0,
E^r_1, E^r_2,\mu)$ on the left and right cells using the ansatz
\eqref{eq:approxM2-distribution}, then integrate over $\mu$ to have
\[
    E_j = \int_0^1 \mu^j \hat{I}(\mu,E^l_0, E^l_1,E^l_2)\dd\mu + 
    \int_{-1}^0 \mu^j \hat{I}(\mu,E^r_0, E^r_1, E^r_2)\dd\mu,\quad\text{for} ~~j=1,2,3,
\]
to give the flux on the boundary.

We compare our model to numerical solutions of the true $M_2$ model,
which for the single-frequency case uses the ansatz \eqref{eq:be-ansatz}.
We compute $M_2$ solutions using the kinetic scheme and optimization
techniques given in \cite{alldredge2014adaptive}.
The entropy from that work is replaced with the Bose-Einstein entropy
\eqref{eq:be-entropy}, and to avoid the singularity in the ansatz
\eqref{eq:be-ansatz} when the polynomial $\boldsymbol{\alpha} \cdot
\boldsymbol{m}$ passes through zero, we limit the step-size in the
Armijo line search so that the polynomial $\boldsymbol{\alpha} \cdot
\boldsymbol{m}$ remains negative at every angular quadrature point.
Our solutions are computed with 1000 cells, and we note that none of
the computations below required the use of the isotropic
regularization technique.

Below we give the numerical results for three examples.
\begin{example}[Two-beam]
  The
  absorption coefficient is $\sigma_a = 2$, the scattering coefficient
  $\sigma_s = 0$, and the speed of light is taken to be $c=1$. The
  spatial domain is $z\in[0,1]$. The initial value is set as
  $E_0=10^{-7}$, $E_1=0$, $E_2=\dfrac{1}{3} E_0$ for all $z$. Inflow
  boundary condition are imposed on both ends, thus the specific
  intensity on the boundaries are $I(t,0,\mu)= \exp
  (-10(\mu-1)^2)$ on the left boundary $z=0$, and $I(t,1,\mu)=\exp
  (-10(\mu+1)^2)$ on the right boundary $z=1$.
\end{example}
In Figure \ref{fig:2beam} we compares the steady-state solution of
$E_0$ between the $M_2$ model and the $\Beta$-closure model.
\begin{figure}%[htbp]    
    \centering
    \includegraphics[width=0.7\textwidth]{./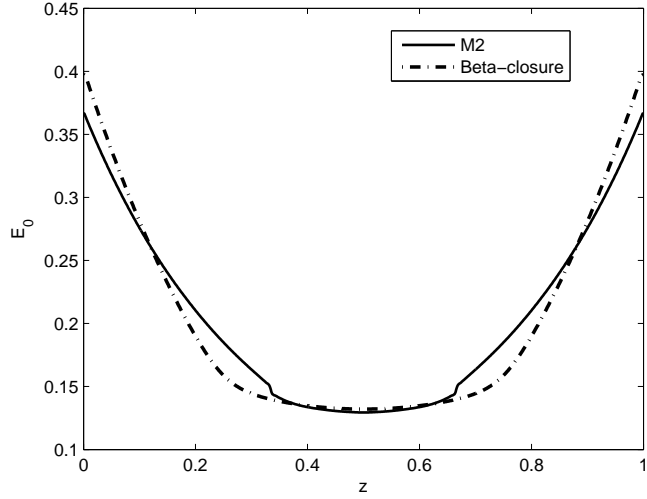}%} 
    \caption{steady state solution for two-beam problem} 
\label{fig:2beam}
\end{figure}

\begin{example}[Isotropic inflow into vacuum]
  In this example, we consider the spatial domain $z\in[-\infty,1]$
  with an isotropic inflow source is imposed on the right
  boundary into a domain which is unbounded on the left.
  We take $\sigma_a = \sigma_s = 0$.
  Initially, for all
  $z$ , we take $E_0=10^{-8}$, $E_1=0$, $E_2=\dfrac{1}{3}E_0$. The
  isotropic inflow is specified at $z=1$. The specific intensity
  outside the right boundary is $I(\mu)=0.5$. We carry out the
  computation from $t_0=0$ to $t= 0.5418$ and $0.8$.
\end{example}
The results are in Figure \ref{fig:inflow}, which are the value of
$E_0$ at $t=0.5418$ and $t=0.8$ for both the $\Beta$-closure model and
the $M_2$ model.
\begin{figure}%[htbp]  
    \centering
    \subfigure{  
        \label{fig:a1} 
    \includegraphics[width=0.48\textwidth]{./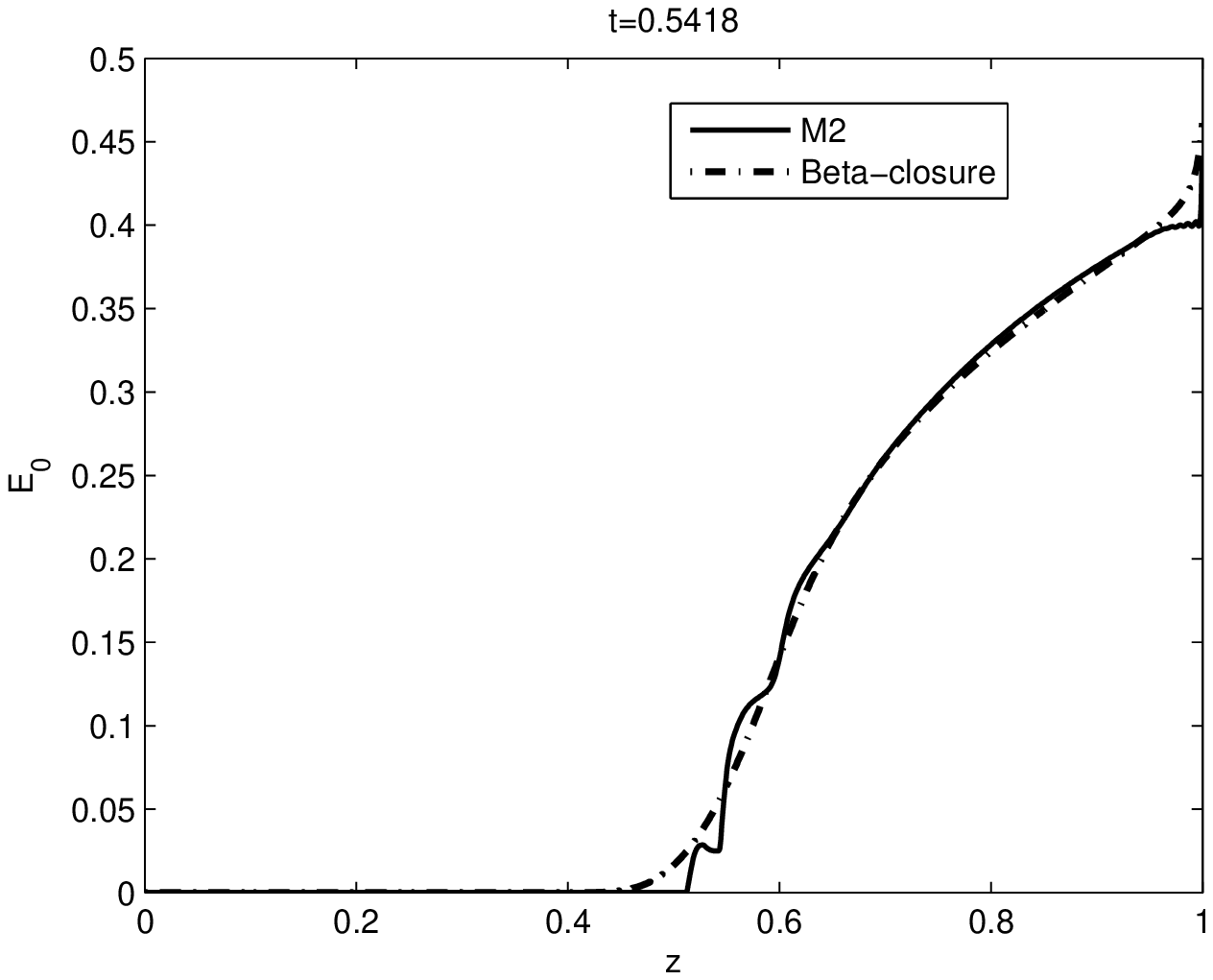}} 
    \subfigure{
        \label{fig:b1} 
    \includegraphics[width=0.48\textwidth]{./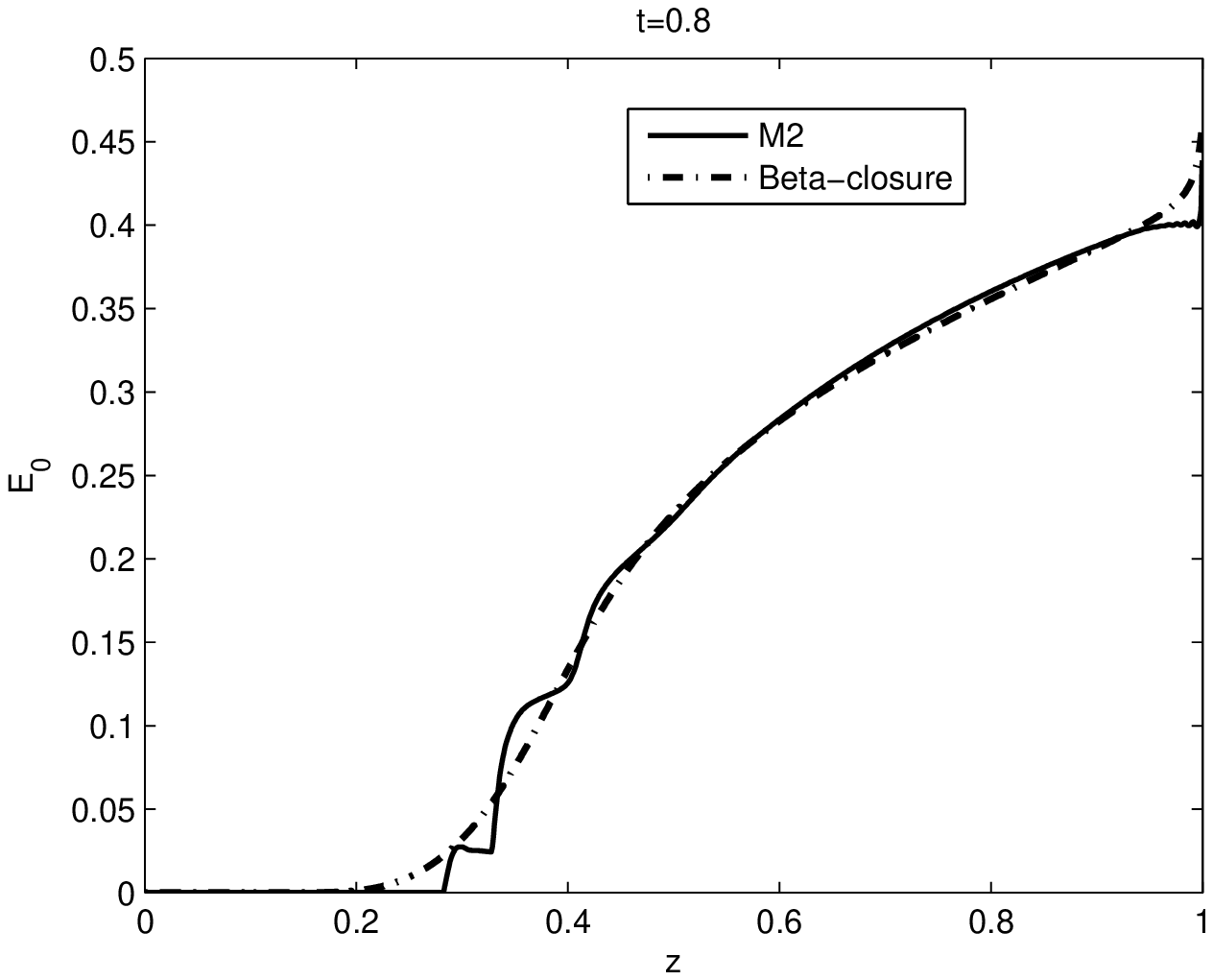}} 
    \caption{Results for inflow into vacumm problem} 
\label{fig:inflow}
\end{figure}

\begin{example}[Plane source]
  In this test the spatial domain is unbounded, and the initial
  value is taken as $E_0(z)=\delta(z)+ 10^{-8}$, $E_1=0$ and
  $E_2=\dfrac{1}{3}E_0$. The simulation time interval is from $t_0=0$
  to $t=0.7005$ and $t=0.9043$.
\end{example}
The numerical results of $E_0$ for $M_2$ and the $\Beta$-closure
model are in Figure \ref{fig:plane-source}.
\begin{figure}%[htbp]  
    \centering
    \subfigure{  
        \label{fig:a2} 
    \includegraphics[width=0.48\textwidth]{./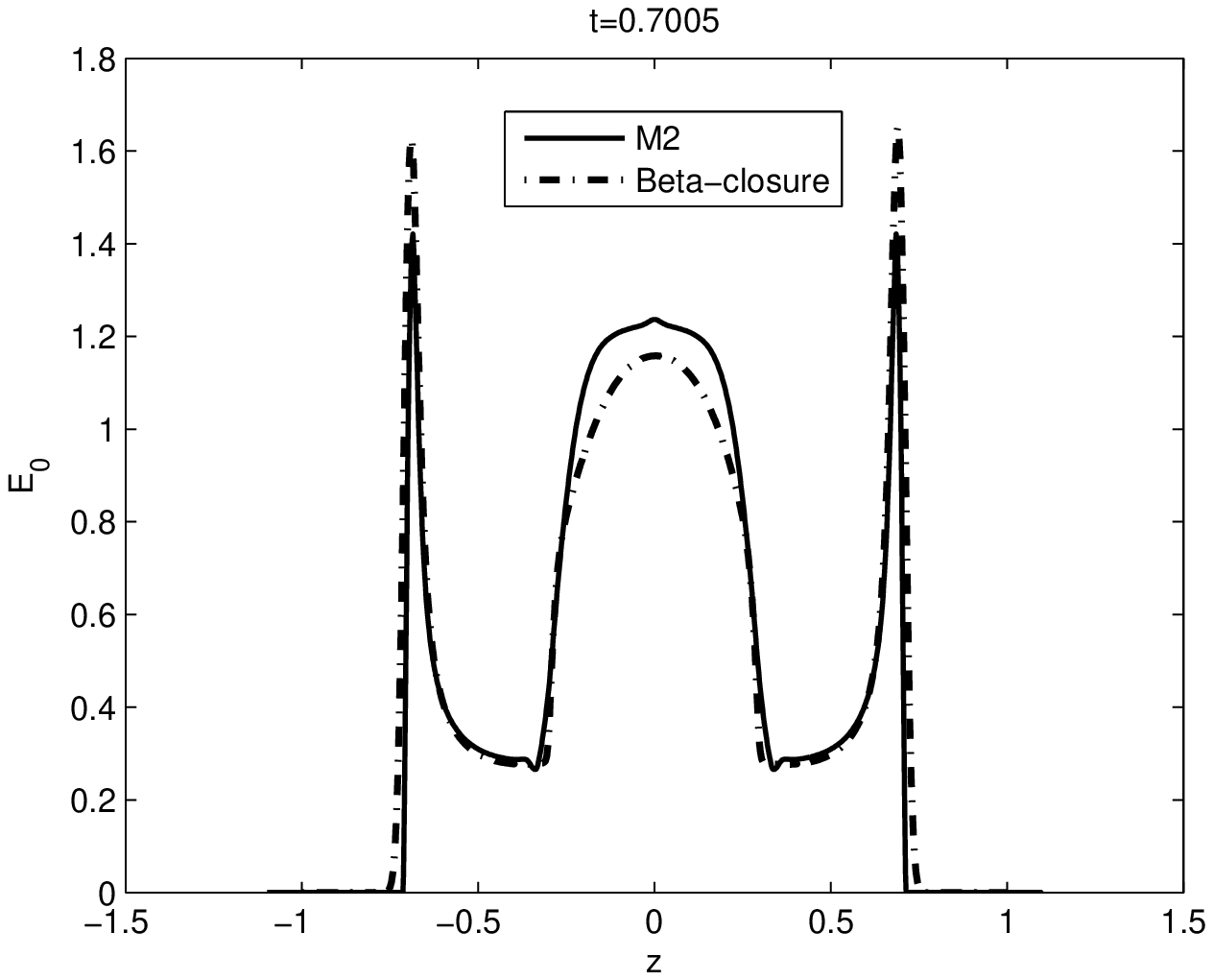}} 
    \subfigure{
        \label{fig:b2} 
    \includegraphics[width=0.48\textwidth]{./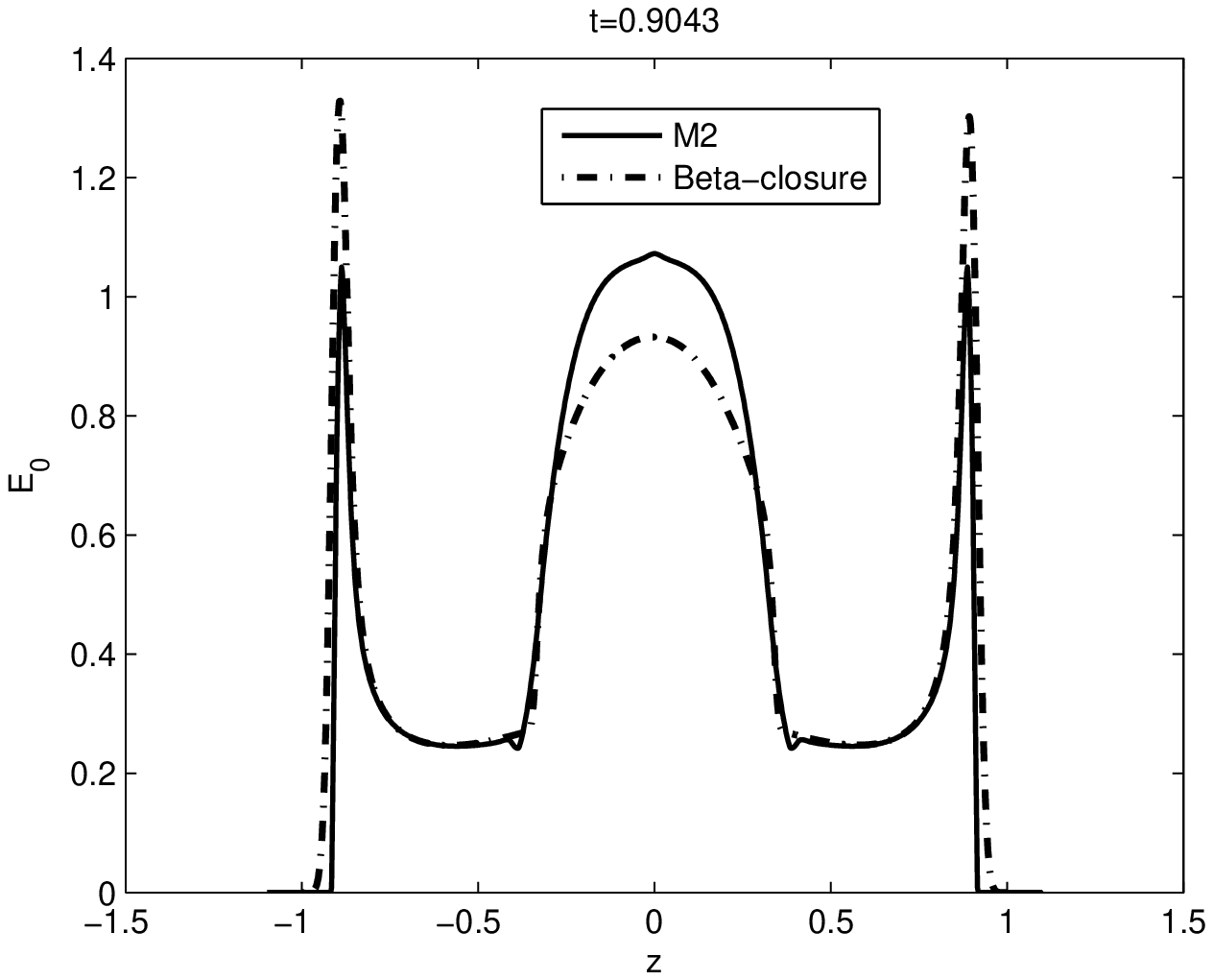}} 
    \caption{Results for plane source problem} 
\label{fig:plane-source}
\end{figure}

\section{Conclusion}\label{sec:conclude}
An approximate $M_2$ model for the radiative transfer 
slab geometry in the cases of single-frequency and grey medium
is proposed. The new model is based on an
ansatz formulated as a $\Beta$-distribution. It shares most of the
advantages of the $M_2$ model while it has an explicit closure. We are
now working the extension of this idea to a three dimensional
configuration and a many moment model.

\section*{Acknowledgements}
The authors appreciate the financial supports provided by \emph{the
  National Natural Science Foundation of China (NSFC)} (Grant 91330205
and 11325102). We thank Mr. Kailiang Wu a lot for the discussion on 
the proof of the global hyperbolicity.

\bibliographystyle{plain}
\bibliography{appM2}

\end{document}